\newtheorem{lemma}{Lemma}[section]
\newtheorem{theorem}[lemma]{Theorem}
\newtheorem{example}[lemma]{Example}
\theoremstyle{definition} %Set Following proclamations in normal text.
\newtheorem{definition}[lemma]{Definition}
\begin{document}

	\title{Probability Measures and projections\\ on Quantum Logics}	

\author{O\v{l}ga N\'an\'asiov\'a\footnote{Slovak University of Technology, Faculty of Electrical Engineering and Information Technology, Institute of Computer Science and Mathematics, Bratislava, Slovakia, nanasiova@stuba.sk,  The first  author was supported by grant
		VEGA No. 1/0710/15}\hspace{.05 in},   Viera  \v{C}er\v{n}anov\'a\footnote{Department of Mathematics and Computer Science, Faculty of Education, Trnava University, Trnava, Slovakia, vieracernanova@hotmail.com}\hspace{.05 in},   \v{L}ubica Val\'a\v{s}kov\'a\footnote{Slovak University of Technology, Faculty of Civil Engineerig, Department of Mathematics and Descriptive Geometry, Slovakia, lubica.valaskova@stuba.sk, the third author was supported by grant VEGA 1/0420/15	  }}

\date{}
\maketitle

\begin{abstract}
	The present paper is devoted to modelling of a probability measure of logical connectives on a quantum logic (QL), via a $G$-map, which is a special map on it.
	We follow the work in which   the probability of logical conjunction, disjunction and symmetric difference and their negations for non-compatible propositions are studied.
	
We study such a $ G $-map on quantum logics, which is a probability  measure of a  projection and   show, that unlike classical (Boolean) logic, probability measure of projections on a quantum logic are not necessarilly pure projections. 
			
  	We  compare  properties of a  $G$-map on QLs with properties of a  probability measure related to logical connectives on a Boolean algebra.

 \textbf{ Keywords}: logical connectives, orthomodular lattice, quantum logic, probability measure, state
\end{abstract}

   	\section*{Introduction }

 The problem of  modelling of probability measures for logical connectives of  non-compatible propositions  started by publishing the paper  Birkhoff, G.,  von Neumann, J. \cite{BirkhoffNeumann}.    Quantum logic allows to model situations with non-compatible events (events that are not simultaneously measurable). Methods of quantum logic appear in data processing, economic models, and in other domains of application e.g. \cite{BirkhoffNeumann,sozz,khrenn,cluster}.
  
  Calculus for non-compatible observables has been described in \cite{Kalina-Nanasiova}, while modelling of logical connectives in terms of their algebraic properties and algebraic structures can be found in \cite{tolo-syl-AND, Herman-Marsden-Piziak, pavicic2016}.

  The present paper follows up the work \cite{NanVal}, 
  where the authors studied logical connectives: conjuction, disjunction, and symmetric difference together with their negations, from the perspective of a~probability measure. An overview of various insights into this issue is provided in   \cite{jarek}.

  The paper is organized as follows. Section \ref{basic def} reminds some basic notions and their properties. A special function that associates a~probability measure to some logical connectives on a quantum logic is defined and studied in Section \ref{Special maps} and Section \ref{projection}. In the last Section \ref{Summary}  properties of a  $G$-map are compared with properties of a  probability measure related to logical connectives on a Boolean algebra.
   	\section{\label{basic def} Basic definitions and properties}
In the first part of this section, we recall fundamental notions: orthomodular lattice, compatibility, orthogonality, state, and their basic properties.
For more details, see \cite{tolo-syl, pulman}.
In the second subsection, we recall some situations with two-dimensional states allowing to model a probability measure of logical connectives in the case of non-compatible events \cite{Kalina-Nanasiova},\cite{nanasiova1}-\cite{Nan-Val-triangle},\cite{Pykacz Nan Val}. 

\subsection{Quantum logic}
 \begin{definition}\label{OcL} An orthomodular lattice (OML) is a lattice $L$
	with $0_L$ and $1_L$ as the smallest and the greatest element,
	respectively, endowed with a unary operation $a\mapsto a'$ that satisfies:
	(i)	$a'':=(a')'=a$; (ii) $a\leq b$ implies $b'\leq a'$; (iii) $a\vee a'=1_L$; 
	(iv) $a\leq b$ implies $b=a\vee(a'\wedge b)$  (the orthomodular law).
	\end{definition}    

\begin{definition}
	Elements $a,b$ of an orthomodular lattice $L$ are called \\ 
-- orthogonal if $a\leq b'$;  (notation $a\perp b$ );\\
--  compatible if $a=(a\wedge b)\vee(a\wedge b')$; (notation $a\leftrightarrow b$).
\end{definition}

\begin{definition} 	A state on an OML $L$ is a function $m:L\to [0,1]$ such that\\
(i) $m(1_L)=1$;\\
(ii) $a\perp b$ implies $m(a\vee b)=m(a)+m(b)$.
\end{definition}

Note that the notions {\emph{state}} and {\emph{probability measure}} are closely tied, and it is clear that $m(0_L)=0$. 

There exist three kinds of OMLs: without any state, with exactly one state and with  infinite number of states (see e.g. \cite{pavicic-stavy a ql}). The first and the second type  of OLMs as a basic structure are not suitable to build a generalized probability theory. The last type of OMLs, which has infinite number of states is considered in the present paper.  

 \begin{definition}	\label{QL} An OML $L$ with   infinite number of states is  called a quantum logic (QL).
 \end{definition}

When studying   states on a quantum logic, one can   meet  some problems, that do not  exist on a Boolean algebra. It means, that some of basic properties of probability measures are not necessarilly satisfied for non-compatible random events. Here are some of them: Bell-type inequalities  (e.g. \cite{khrenn,khrenn2,Pitovsky,Pykacz Nan Val}), Jauch-Piron state, (e.g. \cite{jauch-piron states,honza}), problems of pseudometric (see \cite{NanVal}).

\subsection{Probability  measures of logical connectives on  QLs }
In the paper \cite{nanasiova2},  the notion of \emph {a map for simultaneous measurements (an s-map)} on a  QL  has been introduced. This function is a measure of conjunction even for non-compatible propositions, see e.g. \cite{jarek}.
\vskip 0.5pc
A map $p:L\times L \to [0,1]$ is called a \emph{map for simultaneous measurements} (abbr. \emph{s-map})  if the
following conditions hold:\\
(s1) 	$p(1_L,1_L)=1$;\\
(s2) if $a\perp b$ then $p(a,b)=0$;\\
(s3) if $a\perp b$ then for any $c\in L$:

 $ p(a\vee b,c)=p(a,c)+p(b,c)$ and  $\,\,p(c,a\vee b)=p(c,a)+p(c,b).$\\

The following properties of $s$-map have been proved:\\  Let $p:L\times L \to [0,1]$ be an $s$-map and $a,b,c \in L$. Then\\
1.	if $a\leftrightarrow b$ then $p(a,b)=p(a\wedge b,a\wedge b)=p(b,a)$;\\
2. 	if $a\leq b$ then $p(a,b)=p(a,a)$;\\
3. if $a\leq b$ then $p(a,c)\leq p(b,c)$ and $p(c,a)\leq p(c,b)$ for any $c\in L$;\\
4. $p(a,b)\leq \min{(p(a,a), p(b,b))}$;	\\
5. the map $m_p:L\to [0,1]$ defined as $m_p (a)=p(a,a)$ is a state  on $L$, induced by $p$.

The property 1.  shows that $s$-maps can be seen as providing probabilities of `virtual' conjunctions of propositions, even non-compatible ones, for in the case of compatible propositions the value $p(a,b)$ coincides with the value that a state $m_p$ generated by $p$ takes on the meet $a\wedge b$, which in this case really represents conjunction of $a$ and $b$ (\cite{jarek}).

On the other hand, the identity $p(a,b)=p(b,a)$  may not be true in general. 
So an $s$-map can be used for describing of stochastic causality \cite{Kalina-Nanasiova,Nan-Khre-marg}.
Moreover,   for any $a\in L$:    $\,\,m_p (a) = p(a,a) = p(1_L,a) = p(a,1_L)$. 

Measures of logical connectives  disjunction (\emph{$j$-map}) and symmetric difference (\emph{$d$-map}) are studied on a QL e.g. \cite{NanVal,Nan-Val-Dr}.
\vskip 0.5pc
Let $L$ be a QL. A map $q:L\times L \to [0,1]$ is called a \emph{join map} (abbr. \emph{j-map}) if the
following conditions hold:\\
(j1) $q(0_L,0_L)=0$,\quad  $q(1_L,1_L)=1$;\\
(j2) if $a\perp b$ then $q(a,b) = q(a,a) +q(b,b);$\\
(j3) if $a\perp b$ then for any $c\in L$: 

$q(a\vee b,c)=q(a,c)+q(b,c)-q(c,c)$ and 
$q(c,a\vee b)=q(c,a)+q(c,b)-q(c,c).$
\vskip 0.5pc 
If $p$ is an $s$-map on a QL, $m_p$ is a state induced by $p$ and $q_p:L\times L\to [0,1]$ such that for any $a,b\in L$  $q_p (a,b) =  m_p(a) + m_p(b) - p(a,b)$, then  $q_p$ is  a $j$-map~\footnote{It is easy to see that if $a\leftrightarrow b$, then $q_p(a,b)=m_p(a)+m_p(b)-m_p(a\wedge b)=m_p(a\vee b)$ which explains its name.}. 
\vskip 0.5pc
Let $L$ be a QL. A map $d: L\times L \to [0,1]$  
 is  called  a \emph{difference map} or simply \emph{d-map}\footnote{If $a\leftrightarrow b$, then  $d(a,b)=
 	m_d(a\bigtriangleup b)=m_d(a\wedge b')+m_d(a'\wedge b)$, where $m_d$ is a state induced by $d$. } 
 if  the following conditions hold:\\
%\vskip 0,5pc
(d1) $d(a,a)=0$  for any $a\in L$, and $d(1_L,0_L)=d(0_L,1_L)=1$;\\
(d2)  if  $a\perp b$ then $d(a,b)=d(a,0_L)+d(0_L,b)$;\\
(d3)  if $a\perp b$ then for any $c \in L$:  

$d(a\vee b,c)=d(a,c)+d(b,c)-d(0_L,c)$ and $d(c,a\vee b)=d(c,a)+d(c,b)-d(c,0_L).$

\section{\label{Special maps} Special bivariables maps on QLs }

In \cite{NanVal}, special bivariables maps $G$ satisfying $G(0_L,1_L)=G(1_L,0_L)$ have been introduced. The following defininition  brings an extended version of $G$-map.    
\begin{definition}\label{G-map} Let $L$ be a QL. A map $G:L\times L\to [0,1]$ is called a \emph{$G$-map} if the following holds:
	\begin{itemize}
		\item[(G1)] if $a,b\in\{0_L, 1_L\}$ then $G(a,b)\in\{0,1\}$;
		\item[(G2)] if  $a\perp b$ then $G(a,b)=G(a,0_L)+G(0_L,b)-G(0_L,0_L)$;  
		\item[(G3)] if $a\perp b$ then for any $c \in L$:  \\
		$$G(a\vee b,c)=G(a,c)+G(b,c)-G(0_L,c)$$ 
		$$G(c,a\vee b)=G(c,a)+G(c,a)-G(c,0_L).$$
	\end{itemize}
\end{definition}
A $G$-map enables modelling probability of  logical connectives even for non-compatible propositions.
\begin{lemma}\label{komp} Let  $G:L\times L\to [0,1]$ be a $G$-map, where $L$ is a QL. Then for $a\leftrightarrow b$ it holds 
	$$G(a,b)=G(a\wedge b,a\wedge b)+G(a\wedge b',0_L)+G(0_L,a'\wedge b)-2G(0_L,0_L).$$
\end{lemma}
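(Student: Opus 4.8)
The plan is to break $a$ and $b$ into pairwise orthogonal pieces and then apply (G2) and (G3) mechanically. Since $a\leftrightarrow b$ (and compatibility is a symmetric relation), write $a_1=a\wedge b$, $a_2=a\wedge b'$, $a_3=a'\wedge b$; then $a=a_1\vee a_2$ and $b=a_1\vee a_3$, and the three elements $a_1,a_2,a_3$ are pairwise orthogonal, because in each pair one element lies below $a$ or $b$ while the other lies below $a'$ or $b'$. I would record these structural facts first, since every subsequent use of (G3) requires the two joined elements to be orthogonal, and this is exactly what the decomposition supplies.

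Next I would expand $G(a,b)$ in two stages. Applying the first identity of (G3) to the orthogonal pair $a_1\perp a_2$ gives $G(a,b)=G(a_1\vee a_2,b)=G(a_1,b)+G(a_2,b)-G(0_L,b)$. Then I would expand each of $G(a_1,b)$, $G(a_2,b)$, $G(0_L,b)$ by the second identity of (G3) applied to $b=a_1\vee a_3$ with $a_1\perp a_3$ (here one uses the intended form of (G3), whose right-hand side reads $G(c,a)+G(c,b)-G(c,0_L)$). After this step the right-hand side is a combination of terms of the three shapes $G(a_i,a_i)$, $G(a_i,a_j)$ with $i\neq j$, and $G(a_i,0_L)$, $G(0_L,a_j)$, $G(0_L,0_L)$. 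Every mixed term $G(a_i,a_j)$ with $i\neq j$ is then rewritten by (G2) as $G(a_i,0_L)+G(0_L,a_j)-G(0_L,0_L)$, and $G(0_L,a_1\vee a_3)$ likewise collapses via (G3) and (G2).

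Finally I would substitute everything back and collect coefficients. The contributions involving $G(0_L,a\wedge b)$ cancel, the redundant $G(a\wedge b,0_L)$- and $G(a\wedge b',0_L)$-type terms that appear in the (G3)/(G2) expansions cancel in pairs, and the surviving multiplicity of $G(0_L,0_L)$ works out to $-2$, leaving exactly $G(a\wedge b,a\wedge b)+G(a\wedge b',0_L)+G(0_L,a'\wedge b)-2G(0_L,0_L)$, which is the claimed identity. There is no conceptual obstacle once the orthogonal decomposition is in place; the only thing requiring care is the sign bookkeeping and tracking which occurrences of $G(0_L,0_L)$ survive the cancellations.
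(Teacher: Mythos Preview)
Your proposal is correct and follows essentially the same route as the paper: decompose $a$ and $b$ via compatibility into the three pairwise orthogonal pieces $a\wedge b$, $a\wedge b'$, $a'\wedge b$, expand $G(a,b)$ using (G3) in each coordinate, reduce the mixed terms by (G2), and collect. The only cosmetic difference is that the paper shortcuts one step by applying (G2) directly to $G(a\wedge b',\,b)$ (using $a\wedge b'\perp b$) instead of first expanding $b$ there, which saves one substitution but changes nothing conceptually.
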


\begin{proof}
	Consider $a,b$ compatible. Then $a=(a\wedge b)\vee (a\wedge b')$ and 
	$b=(a\wedge b)\vee (a'\wedge b)$.
	Since $a\wedge b$, $a\wedge b'$ are orthogonal, it follows immediately from Definition \ref{G-map}:
	\begin{eqnarray*}
	G(a,b)&=&G\left((a\wedge b)\vee (a\wedge b'),\,b\right)= G\left(a\wedge b,\,b\right)+G\left(a\wedge b',\,b\right)-
	G(0_L,\, b)\\
	&=&\, G\left(a\wedge b,\,a\wedge b\right)+G\left(a\wedge b,\,a'\wedge b\right)-G(a\wedge b,\,0_L)\\
	&&+ G\left(a\wedge b',\,0_L\right)+G\left(0_L,\,b\right)-G(0_L,\, 0_L)- G\left(0_L,\,b\right)\\
	%=&\, G\left(a\wedge b,\,a\wedge b\right)+G\left(a\wedge b,\,a'\wedge b\right)-G(a\wedge b,\,0_L)\\
	%&+ G\left(a\wedge b',\,0_L\right)-G\left(0_L,\,0_L\right)\\
	&=&\, G\left(a\wedge b,\,a\wedge b\right)+G\left(a\wedge b,\,0_L\right)+
	G\left(0_L,\,a'\wedge b\right)-G\left(0_L,\,0_L\right)\\
	&&-G(a\wedge b,\,0_L)+ G\left(a\wedge b',\,0_L\right)-G\left(0_L,\,0_L\right)\\
	&=&\,G(a\wedge b,a\wedge b)+G(a\wedge b',0_L)+G(0_L,a'\wedge b)-2G(0_L,0_L).
		\end{eqnarray*}  (Q.E.D.)
\end{proof}

\begin{theorem}\label{1-G}
	Let  $G:L\times L\to [0,1]$ be a $G$-map, where $L$ is a QL. Then the map $G'=1-G$ is a $G$-map.	
\end{theorem}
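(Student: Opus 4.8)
The plan is to verify directly that $G'=1-G$ satisfies the three defining conditions (G1), (G2), (G3) of Definition \ref{G-map}. First note that $G'$ maps $L\times L$ into $[0,1]$ because $G$ does, so the only real work is checking the algebraic identities. For (G1): if $a,b\in\{0_L,1_L\}$ then $G(a,b)\in\{0,1\}$ by (G1) for $G$, hence $G'(a,b)=1-G(a,b)\in\{0,1\}$.

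For (G2) I would assume $a\perp b$ and compute, using (G2) for $G$:
\begin{align*}
G'(a,b)&=1-G(a,b)=1-\bigl(G(a,0_L)+G(0_L,b)-G(0_L,0_L)\bigr)\\
&=\bigl(1-G(a,0_L)\bigr)+\bigl(1-G(0_L,b)\bigr)-\bigl(1-G(0_L,0_L)\bigr)\\
&=G'(a,0_L)+G'(0_L,b)-G'(0_L,0_L),
\end{align*}
which is exactly (G2) for $G'$; the three added and subtracted $1$'s cancel as $1+1-1=1$. For (G3), assuming $a\perp b$ and taking arbitrary $c\in L$, the same bookkeeping applies: from $G(a\vee b,c)=G(a,c)+G(b,c)-G(0_L,c)$ one gets
\begin{align*}
G'(a\vee b,c)&=1-G(a\vee b,c)=1-G(a,c)-G(b,c)+G(0_L,c)\\
&=\bigl(1-G(a,c)\bigr)+\bigl(1-G(b,c)\bigr)-\bigl(1-G(0_L,c)\bigr)\\
&=G'(a,c)+G'(b,c)-G'(0_L,c),
\end{align*}
and the second identity in (G3) follows symmetrically.

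There is essentially no hard part here: the statement is a formal consequence of the fact that every defining axiom of a $G$-map is an affine relation among values of $G$ whose coefficients sum to $1$ on each side (or, for the $0$--$1$ condition (G1), is preserved by $x\mapsto 1-x$). The only thing to be careful about is this coefficient-sum bookkeeping — confirming that on both sides of (G2) and (G3) the constant terms match after substituting $G=1-G'$, i.e. that the number of ``$+1$'' terms minus the number of ``$-1$'' terms equals $1$ on each side. Since (G2) has the shape $X=Y+Z-W$ and (G3) likewise has the shape $X=Y+Z-W$, this holds in every case, so $G'$ is a $G$-map.
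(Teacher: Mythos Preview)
Your proof is correct and follows essentially the same direct verification as the paper's own proof: check (G1) trivially, and for (G2) and (G3) substitute $G=1-G'$ and observe that the constants cancel because each axiom has the shape $X=Y+Z-W$. Your closing remark that the axioms are affine with coefficient sums equal on both sides is a nice conceptual gloss the paper does not make explicit, but the underlying computation is identical.
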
	
\begin{proof}
	It suffices to verify the rules (G2) and (G3) from Definition \ref{G-map}, because (G1) is obvious.\\
	(G2): Consider $a,b\in L$ such that $a\perp b$. Then
	\begin{eqnarray*}
	G'(a,b)&=&1-G(a,b)
	=1-\left(G(a,0_L)+G(0_L,b)-G(0_L,0_L)\right)\\
	&=&1-G(a,0_L)+1-G(0_L,b)-1+G(0_L,0_L)\\
	&=&G'(a,0_L)+G'(0_L,b)-G'(0_L,0_L).
	\end{eqnarray*}	
	\noindent	(G3): Consider $a,b,c\in L$, where $a\perp b$. Then
	\begin{eqnarray*}
	G'(a\vee b,c)&=&1-G(a\vee b,c)
	=1-\left(G(a,c)+G(b,c)-G(0_L,c)\right)\\
	&=&1-G(a,c)+1-G(b,c)-1+G(0_L,c)\\
	&=&G'(a,c)+G'(b,c)-G'(0_L,c).
	\end{eqnarray*}
The proof of the  second identity is similar. (Q.E.D.)
\end{proof}

There are sixteen families $\Gamma_i$, ($i=1,...,16$) of maps $G$ according to values in vertices $(1_L,1_L)$, $(1_L,0_L)$, $(0_L,1_L)$, $(0_L,0_L).$ Eight of them with $G(1_L,0_L)=G(0_L,1_L)$ are studied in \cite{NanVal}. See Table \ref{tab1-8}.  
\begin{table}[h] 
	\renewcommand{\arraystretch}{1.3}
	\caption{\label{tab1-8} \small{ Results from the paper \cite{NanVal}}  }
	\begin{center}
		\begin{tabular}{|l|l|l|l|l|l|l|l|l|} 	\hline	
	&$\,\Gamma_1\,$& $\Gamma_2$&$\Gamma_3$& $\Gamma_4$ &$\Gamma_5$ &$\Gamma_6$& $\Gamma_7$ &$\Gamma_8$\\
		\hline
		$G(0_L,0_L)$	&0& 0&0& 0 &1 &1&1&1\\
		\hline
		$G(1_L,0_L)$	&0& 0&1& 1 &1 &0&0&1\\
		\hline	
		$G(0_L,1_L)$	&0& 0&1& 1 &1 &0&0&1\\
		\hline	
		$G(1_L,1_L)$	&0& 1&1& 0 &0 &0&1&1\\
		\hline
	 prob. of	& $0_L$& $a\wedge b$  &$a \vee b$& $(a\Leftrightarrow b)'$ &$a'\vee b'$ &$a'\wedge b'$&$a\Leftrightarrow b$&$\, 1_L\,$\\
	 	& &  $a\leftrightarrow b$  &$a\leftrightarrow b$& $a\leftrightarrow b$ &$a\leftrightarrow b$ &$a\leftrightarrow b$&$a\leftrightarrow b$&\\
	\hline
\end{tabular}
\end{center}
\end{table}
%\newpage

Family $\Gamma_2$ is the set of all $s$-maps (measures of conjuntion), $\Gamma_3$ the set of all $j$-maps (measures of disjunction), and  $\Gamma_4$ is that of all $d$-maps (measures of symmetric difference) on a QL (see  \cite{NanVal} for more details).
In the present paper  a map $G$  generating a measure of projection on a QL is studied.

\section{\label{projection}Probability measures of projections on  QLs}

This part is devoted to  $\Gamma_9-\Gamma_{12}$ with values in the vertices shown in the Table \ref{tab9-16}. As  $\Gamma_9$ and $\Gamma_{10}$ are analogical, only $\Gamma_9$ is studied in detail. Moreover: $G\in \Gamma_{11}$ iff $1-G\in\Gamma_{9}$, and 
$G\in \Gamma_{12} $ iff  $1-G\in \Gamma_{10}$ (Proposition \ref{1-G} and  Table \ref{tab9-16}).

\begin{table}[h] 
	\renewcommand{\arraystretch}{1.3}
	\caption{\label{tab9-16}
		\small{$\Gamma_9-\Gamma_{12}$ values in vertices}}
	\centering
	\begin{tabular}{|c|c|c|c|c|c|c|c|c|} 	\hline	
		$ $	&$\,\Gamma_9\,$& $\,\Gamma_{10}\,$&$\,\Gamma_{11}\,$& $\,\Gamma_{12}\,$ \\
		\hline
		$\,G(0_L,0_L)\,$	&0     & 0            &1            & 1        \\
		\hline
		$G(0_L,1_L)$	&0& 1&1& 0 \\
		\hline
		$G(1_L,0_L)$	&1& 0&0& 1 \\
		\hline	
		$G(1_L,1_L)$	&1& 1&0& 0  \\
		\hline	
	\end{tabular}
\end{table}
%\newpage
\begin{lemma}\label{vlproje1} Let  $L$ be a QL and $G \in \Gamma_9$. Then for any $a,b\in L$ it holds
	\begin{enumerate}
		\item $G(1_L,a)=1$, $G(0_L,a)=0$;%, for any $a\in L$.
		\item $G(a,0_L)=G(a,a)=G(a,1_L)$;%, for any $a\in L$.
		\item $G(a,0_L)=\frac{1}{2}(G(a,b)+G(a,b'))$;
		\item $G(a,0_L)=\frac{1}{n}\sum_{i=1}^nG(a,b_i),$   where $b_1,\cdots ,b_n$ is an
		orthogonal partition of unity $1_L$.
	\end{enumerate}
\end{lemma}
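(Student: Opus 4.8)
The plan is to derive the four identities in sequence, each building on the previous ones, using only the defining axioms (G1)--(G3) of a $G$-map together with the values of $G$ in the four vertices recorded for $\Gamma_9$ in Table~\ref{tab9-16}, namely $G(0_L,0_L)=0$, $G(0_L,1_L)=0$, $G(1_L,0_L)=1$, $G(1_L,1_L)=1$.

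First I would prove item (i). For the equation $G(0_L,a)=0$: write $1_L=a\vee a'$ with $a\perp a'$ and apply the second identity in (G3) with $c=0_L$ (reading it, as the paper intends, as the additive rule in the second coordinate) to get $G(0_L,1_L)=G(0_L,a)+G(0_L,a')-G(0_L,0_L)$, i.e. $0=G(0_L,a)+G(0_L,a')$; since $G$ is $[0,1]$-valued and both summands are nonnegative, each must vanish, so $G(0_L,a)=0$. For $G(1_L,a)=1$, apply Theorem~\ref{1-G}: $G'=1-G$ lies in $\Gamma_{11}$, hence satisfies $G'(0_L,0_L)=1$ and $G'(0_L,1_L)=1$, and the same orthogonality argument (now the two summands of $G'$ sum to $G'(0_L,1_L)-G'(0_L,0_L)=0$ after rearranging, or more directly: $1-G(1_L,a)=G'(0_L?\dots)$) — cleaner is to use (G3) in the \emph{first} coordinate: $G(1_L,a)=G(a\vee a',a)=G(a,a)+G(a',a)-G(0_L,a)$, which I will revisit; alternatively note $G(1_L,a)+G(1_L,a')$-type relations. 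The slick route: since $1-G\in\Gamma_9'$ with swapped vertex values, $1-G(1_L,a)$ plays the role of ``$(1-G)(0_L,a)$'' for a $\Gamma_{11}$-map and the first computation gives it equals $0$; thus $G(1_L,a)=1$.

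Next, item (ii). Apply (G2) with $b=0_L$: since $a\perp 0_L$ always, $G(a,0_L)=G(a,0_L)+G(0_L,0_L)-G(0_L,0_L)=G(a,0_L)$, which is vacuous; instead I use the second-coordinate rule (G3) with the decomposition $1_L=0_L\vee 1_L$? That is degenerate. The real tool is (G2) applied with the roles set up so that orthogonal complements appear: for $a\perp b$ we have $G(a,b)=G(a,0_L)+G(0_L,b)-G(0_L,0_L)=G(a,0_L)+0-0=G(a,0_L)$ using item (i). In particular, taking $b=a'$ gives $G(a,a')=G(a,0_L)$. Now expand $G(a,1_L)=G(a,a\vee a')$ by the second-coordinate identity in (G3): $G(a,1_L)=G(a,a)+G(a,a')-G(a,0_L)=G(a,a)+G(a,0_L)-G(a,0_L)=G(a,a)$. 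It remains to identify $G(a,a)$ with $G(a,0_L)$; for this expand $G(a,1_L)$ differently, or expand $G(a\vee a',\,a)$ in the first coordinate — wait, that needs care. Better: also $G(a,1_L)=G(a,0_L)$ will follow once I show $G(a,0_L)=G(a,a)$, so I compute $G(a,a)$ via $G(a,a\vee a')=G(a,1_L)$ again — this is circular, so the clean step is: apply (G2) is not enough; instead use item (iii), proved independently below, with $b=a$: $G(a,0_L)=\tfrac12(G(a,a)+G(a,a'))=\tfrac12(G(a,a)+G(a,0_L))$, giving $G(a,0_L)=G(a,a)$, and combined with $G(a,1_L)=G(a,a)$ above we get (ii). So I will prove (iii) before finishing (ii).

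For item (iii): given arbitrary $b$, we have $b\perp b'$ and $b\vee b'=1_L$. Apply the second-coordinate identity of (G3) with $c=a$: $G(a,1_L)=G(a,b\vee b')=G(a,b)+G(a,b')-G(a,0_L)$. By item (i) (applied after noting $G(a,1_L)$ should be computed — actually $G(a,1_L)$ is not covered by (i)), hmm; instead I first establish $G(a,1_L)=G(a,0_L)$ using (G2): since $a\perp 0_L$, (G2) gives nothing, but since the pair $(1_L,0_L)$ and decomposition $1_L\vee 0_L$ — degenerate again. The honest fix: combine the two relations $G(a,1_L)=G(a,b)+G(a,b')-G(a,0_L)$ (from (G3), any $b$) with the special case $b=1_L$, $b'=0_L$: $G(a,1_L)=G(a,1_L)+G(a,0_L)-G(a,0_L)$, vacuous; and with $b=0_L$: same. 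So $G(a,1_L)$ stays a free constant and the relation reads $G(a,b)+G(a,b')=G(a,1_L)+G(a,0_L)$ for all $b$. Taking $b=0_L$ shows $G(a,1_L)+G(a,0_L)=G(a,0_L)+G(a,1_L)$, consistent. Hence $G(a,b)+G(a,b')=G(a,1_L)+G(a,0_L)$; to get the factor $\tfrac12$ with $G(a,0_L)$ on the left I still need $G(a,1_L)=G(a,0_L)$. This is exactly the content that makes (ii) and (iii) interdependent. \textbf{The main obstacle} is therefore pinning down the identity $G(a,1_L)=G(a,0_L)=G(a,a)$; I expect to break the circularity by a direct two-sided expansion — expand $G(1_L,1_L)=1$ as $G(a\vee a',1_L)=G(a,1_L)+G(a',1_L)-G(0_L,1_L)=G(a,1_L)+G(a',1_L)$ using (i), and similarly $G(1_L,0_L)=1=G(a,0_L)+G(a',0_L)$; then expand $G(1_L,a)=1$ (from (i)) as $G(a,a)+G(a',a)=1$ and $G(1_L,a')=1$ as $G(a,a')+G(a',a')=1$, i.e. $G(a,0_L)+G(a',a')=1$ after using the already-shown $G(a,a')=G(a,0_L)$ and likewise $G(a',a)=G(a',0_L)$; subtracting the two ``$=1$'' pairs yields $G(a,a)-G(a,0_L)=G(a',0_L)-G(a',a')=-(G(a',a')-G(a',0_L))$, and applying the same with $a$ and $a'$ swapped forces both differences to be $0$. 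That gives $G(a,a)=G(a,0_L)$; then the (G3)-relation above becomes $G(a,b)+G(a,b')=2G(a,0_L)$, which is (iii), and also $G(a,1_L)=G(a,0_L)$, completing (ii).

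Finally, item (iv) is the $n$-fold generalization of (iii). Given an orthogonal partition $b_1,\dots,b_n$ of $1_L$, I will iterate the second-coordinate identity of (G3). Set $c_k=b_{k+1}\vee\cdots\vee b_n$ so that $b_k\perp c_k$ and $c_{k-1}=b_k\vee c_k$; then $G(a,c_{k-1})=G(a,b_k)+G(a,c_k)-G(a,0_L)$. Summing (telescoping) from $k=1$ to $n-1$ and using $c_0=1_L$, $c_{n-1}=b_n$, and $G(a,1_L)=G(a,0_L)$ from (ii), one obtains $G(a,0_L)=\sum_{i=1}^n G(a,b_i)-(n-1)G(a,0_L)$, hence $n\,G(a,0_L)=\sum_{i=1}^n G(a,b_i)$, i.e. $G(a,0_L)=\tfrac1n\sum_{i=1}^n G(a,b_i)$. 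The only subtlety here is the routine induction showing $b_k\perp c_k$, which follows from the fact that a join of elements each orthogonal to $b_k$ is again orthogonal to $b_k$ in an OML; I would state this as a one-line remark rather than prove it in detail.
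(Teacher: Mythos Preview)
Your proposal follows the same overall strategy as the paper (expand suitable values of $G$ via (G3) along the decomposition $1_L=a\vee a'$ and use the $\Gamma_9$ vertex data), and items (iii) and (iv) are handled correctly once (ii) is in place. However, there are two genuine gaps.

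\medskip
\textbf{Item (i), the case $G(1_L,a)=1$.} Your duality sketch via $1-G$ is muddled: you write that ``$1-G(1_L,a)$ plays the role of $(1-G)(0_L,a)$'', which is not the case, and the argument never actually concludes. The paper's route is the mirror image of your own computation for $G(0_L,a)=0$: apply (G3) in the second coordinate with $c=1_L$ to get
\[
1=G(1_L,1_L)=G(1_L,a)+G(1_L,a')-G(1_L,0_L)=G(1_L,a)+G(1_L,a')-1,
\]
hence $G(1_L,a)+G(1_L,a')=2$; since $G$ is $[0,1]$-valued, both terms equal $1$.

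\medskip
\textbf{Item (ii), the step $G(a,a)=G(a,0_L)$.} Your final deduction is wrong. From the two relations you derived,
\[
(A)\ G(a,a)+G(a',0_L)=1,\qquad (B)\ G(a,0_L)+G(a',a')=1,
\]
subtraction gives $D(a)=D(a')$ with $D(x):=G(x,x)-G(x,0_L)$, not $D(a)=-D(a')$; and in either case, swapping $a\leftrightarrow a'$ reproduces the same equation and does \emph{not} force $D=0$. The fix is one line, using an equation you already wrote down: from the first-coordinate expansion of $G(1_L,0_L)$,
\[
(C)\ G(a,0_L)+G(a',0_L)=1,
\]
subtract $(C)$ from $(A)$ to get $G(a,a)-G(a,0_L)=0$. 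This is exactly the paper's argument: compare $G(1_L,0_L)=G(a,0_L)+G(a',0_L)$ with $G(1_L,a)=G(a,a)+G(a',a)-G(0_L,a)=G(a,a)+G(a',0_L)$ (the last equality using (G2) on the orthogonal pair $(a',a)$ and item (i)); since $G(1_L,0_L)=G(1_L,a)=1$, cancel $G(a',0_L)$. After this, your derivations of $G(a,1_L)=G(a,a)$, of (iii), and of (iv) are correct and coincide with the paper's.
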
 
\begin{proof}
\begin{enumerate}
	\item  Let $G\in \Gamma_9$ and $a \in L$. Then from
	\begin{eqnarray*}
	1&=&G(1_L,1_L)=G(1_L,a)+G(1_L,a')-G(1_L,0_L)
	=  G(1_L,a)+G(1_L,a')-1\\
	2&=&G(1_L,a)+G(1_L,a')\\
	0&=&G(0_L,1_L)=G(0_L,a)+G(0_L,a')-G(0_L,0_L)
	=G(0_L,a)+G(0_L,a'),
	%G(1_L,a)&=1 \emph{ as } G(x,y) \in [0,1]. 
	\end{eqnarray*}
	and taking into account that $G:L^2\to [0,1]$, it follows that $G(1_L,a)=1$ and $G(0_L,a)=0$.
	%\begin{eqnarray*}
	%0&=G(0_L,1_L)=G(0_L,a)+G(0_L,a')-G(0_L,0_L)
	%=G(0_L,a)+G(0_L,a')\\
	% G(0_L,a)&=0\emph{ as } G(x,y) \in [0,1].
	%\end{eqnarray*}	
	\item 	Let $G\in \Gamma_9$ and $a \in L$. Then
	\begin{eqnarray*}
	G(1_L,0_L)&=& G(a,0_L)+G(a',0_L)-G(0_L,0_L)=G(a,0_L)+G(a',0_L)\\
	G(1_L,a) &=& G(a,a)+G(a',a)-G(0_L,a)\\
	&=&G(a,a)+G(a',0_L)+G(0_L,a)-G(0_L,0_L)
	\end{eqnarray*}
	As $G(1_L,0_L)=G(1_L,a)$ and $G(0_L,a)=G(0_L,0_L)=0$, one obtains $G(a,0_L)=G(a,a)$.
	Moreover,
	\begin{eqnarray*}
	G(a,1_L)&=&G(a,a)+G(a,a')-G(a,0_L)\\
	&=&G(a,a)+G(a,0_L)+G(0_L,a')-G(0_L,0_L)-G(a,0_L)\\
	&=&G(a,a).
	\end{eqnarray*}
	\item Let $G\in \Gamma_9$ and $a,b \in L$. Then
	\begin{eqnarray*}
	G(a,1_L)&=&\,G(a,b)+G(a,b')-G(a,0_L)\\
	G(a,1_L)+G(a,0_L)&=&\,G(a,b)+G(a,b')\\ 
	2G(a,0_L)&=&\,G(a,b)+G(a,b').
	\end{eqnarray*}
	\item To prove this identity, it suffices to set $1_L=\bigvee_{i=1}^n b_i $  and follow the method used in the previous item. 
(Q.E.D.)
\end{enumerate} 
\end{proof}
\begin{theorem}\label{vlproje2} Let  $L$ be a QL, and $G \in \Gamma_9$. Then for any $a,b\in L$ it holds
	\begin{enumerate}
		\item If $a\leftrightarrow b$
			 then $G(a,b)=G(a,0_L).$
		\item For any choice of $b$, the map $m_b:L\to [0,1]$: $m_b(a)=G(a,b)$ is a state on $L$. 
	\end{enumerate}
\end{theorem}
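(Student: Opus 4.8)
The plan is to prove the two claims separately, each by a short reduction to the facts already recorded in Lemma~\ref{vlproje1} and the compatibility identity of Lemma~\ref{komp}.

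For the first claim I would start from Lemma~\ref{komp}: for $a\leftrightarrow b$,
$$G(a,b)=G(a\wedge b,a\wedge b)+G(a\wedge b',0_L)+G(0_L,a'\wedge b)-2G(0_L,0_L).$$
Since $G\in\Gamma_9$ we have $G(0_L,0_L)=0$, and by item~1 of Lemma~\ref{vlproje1} the term $G(0_L,a'\wedge b)$ vanishes as well. By item~2 of the same lemma, $G(a\wedge b,a\wedge b)=G(a\wedge b,0_L)$, so the right-hand side collapses to $G(a\wedge b,0_L)+G(a\wedge b',0_L)$. Finally $a\wedge b$ and $a\wedge b'$ are orthogonal with join $a$, so applying (G3) in the first coordinate (and using $G(0_L,0_L)=0$ once more) gives $G(a\wedge b,0_L)+G(a\wedge b',0_L)=G(a,0_L)$. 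Hence $G(a,b)=G(a,0_L)$.

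For the second claim I would simply check the two defining properties of a state for $m_b(a)=G(a,b)$. Normalization $m_b(1_L)=G(1_L,b)=1$ is exactly item~1 of Lemma~\ref{vlproje1}. For additivity, if $a_1\perp a_2$ then (G3) yields
$$m_b(a_1\vee a_2)=G(a_1\vee a_2,b)=G(a_1,b)+G(a_2,b)-G(0_L,b),$$
and $G(0_L,b)=0$ again by item~1 of Lemma~\ref{vlproje1}, so $m_b(a_1\vee a_2)=m_b(a_1)+m_b(a_2)$. As $G$ takes values in $[0,1]$ by Definition~\ref{G-map}, $m_b$ is a state on $L$.

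I do not expect a serious obstacle: the essential work has already been done in Lemma~\ref{vlproje1}, which pins down $G$ on the ``boundary'' and makes the additivity constant $G(0_L,c)$ vanish. The only point needing a little care is the orthogonal decomposition $a=(a\wedge b)\vee(a\wedge b')$ used in the first part, where compatibility of $a$ and $b$ must be invoked both to justify the identity and to guarantee orthogonality of the two meets, so that (G3) legitimately applies; the rest is bookkeeping with the constants $G(0_L,0_L)=G(0_L,1_L)=0$.
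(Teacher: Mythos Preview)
Your argument is correct and matches the paper's proof essentially line for line: part~(i) is derived from Lemma~\ref{komp} together with items~1 and~2 of Lemma~\ref{vlproje1} and one application of (G3) to recombine $G(a\wedge b,0_L)+G(a\wedge b',0_L)$ into $G(a,0_L)$, and part~(ii) is exactly the verification $m_b(1_L)=G(1_L,b)=1$ and the (G3) additivity with $G(0_L,b)=0$. Your closing remark about needing $a\leftrightarrow b$ to ensure $(a\wedge b)\vee(a\wedge b')=a$ is the right caveat; the paper leaves this implicit.
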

\begin{proof}
Let $G\in \Gamma_9$. 
\begin{enumerate}
	\item Consider $a,b \in L, a\leftrightarrow b$. Then according to Lemmas \ref{komp} and \ref{vlproje1}
	\begin{eqnarray*}
	G(a,b)&=&G(a\wedge b,a\wedge b )+G(a\wedge b',0_L )+G(0_L,a'\wedge b )-2G(0_L,0_L)\\
	G(a,b)&=&G(a\wedge b,0_L )+G(a\wedge b',0_L )-G(0_L,0_L)\\
	G(a,b)&
	=&G(a,0_L ).
	\end{eqnarray*}
	\item Consider $b\in L$, and define $m_b:L\to [0,1]$: $m_b(a)=G(a,b)$. Then
	$$ m_b(1_L)=G(1_L,b)=1,$$
	and for  arbitrary $x,y\in L$, $x\perp y$, it holds $$m_b(x\vee y)=G(x\vee y,b)=G(x,b)+G(y,b)-G(0_L,b)=m_b(x)+m_b(y).$$
	Therefore $m_b$ is a state on L.  (Q.E.D.)
\end{enumerate} 
\end{proof}

From Proposition \ref{vlproje2} it follows that any $G\in \Gamma_9$ 
is a probability measure of the projection onto the first coordinate. Anological properties are fullfiled for any   $G\in \Gamma_{10}$, which is a probability measure  of the projection onto the second coordinate. 

If $L$ is a  Boolean algebra, then for any $G\in\Gamma_9$ ($G\in\Gamma_{10}$) it holds $G(a,b)=G(a,0_L)$ ($G(a,b)=G(0_L,b)$) for all $a,b\in L$. If $L$ is a QL but not a Boolean algebra, then the equality does not hold in general, as illustrates the following example.  	

\begin{example}\label{existenciaGPpr} 
Consider $L=\{0,1,a,a',b,b'\}$, a horizontal sum of Boolean algebras  $\mathcal{B}_a=\{0,1,a,a'\}$, $\mathcal{B}_b=\{0,1,b,b'\}$. Consider $r_1,r_2,u_1,u_2\in [0,1]$. Every $G\in \Gamma_9$ can be fully defined by Table \ref{tabG-proj}, where $\alpha=\frac{1}{2}(r_1+r_2)$, $ \beta=\frac{1}{2}(u_1+u_2)$ according to Lemma \ref{vlproje1}. If $r_1\neq r_2$ then $G(a,b)\neq G(a,0_L)$.
\begin{table}[h!] 
	\renewcommand{\arraystretch}{1.3}
\caption{\label{tabG-proj}
	\small{  $G$-maps from  $\Gamma_{9}$ on a horizontal sum of Boolean algebras} 
}
\centering
	\begin{tabular}{|c|c|c|c|c|c|c|} 	\hline	
			&$a$& $a'$&$b$& $b'$ &$0_L$ &$1_L$\\
		\hline
		$\quad\,\,\,\, a\quad \,\,\,\,$	&$\alpha $& $\alpha $&$r_1$& $r_2$ &$\alpha $ &$\alpha $\\
		\hline
		$a'$	&$1-\alpha$& $1-\alpha $&$1-r_1$& $1-r_2$ &$1-\alpha $ &$1-\alpha $\\
		\hline	
		$b$	&$u_1 $& $u_2$&$\beta$& $\beta$ &$\beta $ &$\beta $\\
		\hline	
		$b'$	&$1-u_1 $& $1-u_2 $&$1-\beta$& $1-\beta $ &$1-\beta $ &$1-\beta $\\
		\hline	
		$0_L$	&0& $0 $&$0$& $0$ &0 &0\\
		\hline	
		$1_L$	&$1$& $1 $&$1$& $1$ &1 &1\\
		\hline	
	\end{tabular}
	
\end{table}

From Table \ref{tabG-proj}, one can extract all states on $L$, related to the choice of $r_1,r_2,u_1,u_2$. Each column in the Table \ref{tabG-proj} represents a state on $L$.  As example, $m_b$ and $m_0$ are in Table \ref{tabG-proj-miery}.
\begin{table}[h!] 
	\renewcommand{\arraystretch}{1.3}
	\caption{\label{tabG-proj-miery}
	\small{States on $L$ } 
}	\centering
	\small
	\begin{tabular}{|c|c|c|c|c|c|c|} 	\hline	
		&$a$& $a'$&$b$& $b'$ &$0_L$ &$_L$\\
		\hline
		$\,\,\,m_b\,\,\,$	&$\,\,\,r_1\,\,\, $& $1-r_1 $&$\,\,\,\beta\,\,\,$& $\,1-\beta\,$ &$\,\,\,0\,\,\, $ &$\,\,\,1\,\,\, $\\
		\hline
		$m_0$	&$\alpha $& $1-\alpha $&$\beta$& $1-\beta$ &$0 $ &$1 $\\
		\hline
	\end{tabular}
\end{table}
%\newpage
\end{example}
\begin{definition}\label{pure} Let $G\in \Gamma_9$.  The map $G$ is called  a pure projection if   $G(a,b)=G(a,0_L)$ 
	for any $a,b\in L$.  
\end{definition}	
\begin{theorem}\label{projekcia 2}
	For every $s$-map $p$ there exists a $G$--map  $G_p\in\Gamma_9$ such that $G_p(a,b)=G_p(a,0_L)$. 
\end{theorem}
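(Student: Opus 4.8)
The plan is to exhibit the required $G$-map explicitly rather than argue for its existence abstractly. Given an $s$-map $p$, recall from property~5 of $s$-maps that the map $m_p\colon L\to[0,1]$, $m_p(a)=p(a,a)$, is a state on $L$. I would then define $G_p\colon L\times L\to[0,1]$ by $G_p(a,b)=m_p(a)$ for all $a,b\in L$; that is, the map that ignores its second argument and reproduces the state $m_p$ in the first. Since $m_p$ takes values in $[0,1]$, this is a well-defined map into $[0,1]$, and $G_p(a,b)=m_p(a)=G_p(a,0_L)$ holds by construction, so $G_p$ will be a pure projection in the sense of Definition~\ref{pure} once we check it is a $G$-map of the right family.

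First I would check membership in $\Gamma_9$: because $m_p(0_L)=0$ and $m_p(1_L)=1$, we get $G_p(0_L,0_L)=G_p(0_L,1_L)=0$ and $G_p(1_L,0_L)=G_p(1_L,1_L)=1$, which are precisely the vertex values listed for $\Gamma_9$ in Table~\ref{tab9-16}; in particular (G1) holds. Then I would verify (G2) and (G3). For (G2), when $a\perp b$ both sides of $G_p(a,b)=G_p(a,0_L)+G_p(0_L,b)-G_p(0_L,0_L)$ equal $m_p(a)$, since $G_p(0_L,b)=G_p(0_L,0_L)=m_p(0_L)=0$. For (G3), if $a\perp b$ then, using additivity of the state $m_p$ on orthogonal elements, $G_p(a\vee b,c)=m_p(a\vee b)=m_p(a)+m_p(b)=G_p(a,c)+G_p(b,c)-G_p(0_L,c)$; the second identity of (G3) has all four occurring values equal to $m_p(c)$, so it holds trivially. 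Hence $G_p$ is a $G$-map, $G_p\in\Gamma_9$, and it is a pure projection.

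There is no genuine obstacle in this argument: the entire content is the observation that the state $m_p$ induced by $p$ is exactly the ingredient needed, and that a map which "forgets" its second coordinate automatically satisfies the orthogonality-additivity axioms (G2)--(G3), lands in $\Gamma_9$, and meets the pure-projection identity. The only point deserving a word of care is the second half of (G3), where one should note that the relevant terms $G_p(c,a)$, $G_p(c,b)$, $G_p(c,0_L)$ all coincide with $m_p(c)$, so the alternating sum collapses correctly.
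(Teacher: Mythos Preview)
Your construction is correct and coincides with the paper's: the paper sets $G_p(a,b)=p(a,b)+p(a,b')$ and immediately observes this equals $p(a,a)=m_p(a)$, which is exactly your definition. Your verification of (G1)--(G3) and the $\Gamma_9$ vertex values is more explicit than the paper's terse assertion, but the underlying idea is identical.
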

\begin{proof} Set $G_p(a,b)=p(a,b)+p(a,b')=p(a,a)$, where $p$ is an arbitrary $s$-map. Then $G_p\in\Gamma_9$ and $G_p(a,b)=G_p(a,0_L)$ for any $b\in L$. (Q.E.D.)
\end{proof}

The results for  $\Gamma_9-\Gamma_{12}$ are summarized  in Table \ref{v}.

\begin{table}[h] 
	\renewcommand{\arraystretch}{1.3}
	\centering
	\small
	\caption{\label{v}\small{Results for $\Gamma_9-\Gamma_{12}$}}
	\begin{tabular}{|c|c|c|c|c|c|c|c|c|} 	\hline	
		&$\,\Gamma_9\,$& $\,\Gamma_{10}\,$&$\,\Gamma_{11}\,$& $\,\Gamma_{12}\,$ \\
		\hline
		$\,\,$probability  of$\,\,$	&$a$& $b$&$a'$& $b'$  \\
		\hline	
	\end{tabular}
	
\end{table}

\section{\label{Summary} Summary }

Two issues related to the $G$-map on a quantum logic arised:
%On the quantum logic was raised two issues related to the map $G$: 
existence of this function and its properties on a QL.
The existence of $s$-map  has been solved in \cite{nanasiova1} and \cite{nanasiova2}. 

Some features of  $G$-map are summarized in the following:

\begin{itemize}
\item The classes $\Gamma_2$, $\Gamma_3$, $\Gamma_5$, $\Gamma_6$  are mutually isomorphic.
\item Each probability measure on $ \mathcal{B}$ induces a pseudometric. 
       It means, that for any    probability measure $m$, the map $d_m$: 
         $d_m(a,b)=m(a\wedge b')+m(a'\wedge b)$ is a pseudometric on $ \mathcal{B}$ induced by $m$.
         
	 If  $p\in\Gamma_2$  and
	 $d_p(a,b)=p(a,b')+p(a', b),$
	 then  $d_p\in\Gamma_{4}$ and it does not have to be a pseudometric on the quantum logic.
\item Let $L$ be a QL,  $m$ be a state and $p$ be an $s$-map. 
Then  the first Bell-type inequality (\ref{bem1})  is not necessarily  fulfilled and   the version  (\ref{bepl1})    is always fulfilled.
\begin{eqnarray}
\label{bem1} m(a)+m(b)~-~m(a\wedge b)&\leq &1		\\
\label{bepl1}p(a,a)+p(b,b)-p(a,b)&\leq &1
\end{eqnarray}
	The second Bell-type innequality (\ref{IIBm})  is not necessarily  fulfilled and   the version  (\ref{IIBp}) is fulfilled for every $s$-map, which induces  a pseudometric on $L$ 	 \cite{Pykacz Nan Val}.
\begin{eqnarray}
\label{IIBm} m(a)+m(b)+m(c)-m(a\wedge b)-m(a\wedge c)-m(c\wedge b)&\leq& 1\\
\label{IIBp} p(a,a)+p(b,b)+p(c,c)-p(a,b)-p(a,c)-p(c,b)&\leq &1
\end{eqnarray}	
		   
\item  Analogically implication (\ref{jpm}) (Jauch-Piron state, see e.g. \cite{jauch-piron states,honza}) can be violated on a QL $L$ but implication (\ref{jpp}) is   always valid 
\begin{eqnarray}
\label{jpm} m(a)=m(b)=1\quad\quad&\Rightarrow&\quad\quad m(a\wedge b)=1\\
\label{jpp} p(a,a)=p(b,b)=1 \quad\quad&\Rightarrow&\quad\quad p(a,b)=1,
\end{eqnarray}
	and moreover for any $c\in L$ $p(a,c)=p(c,a)=p(c,c)$.	

	 	   	\item  The classes  $\Gamma_{9}$ -- $\Gamma_{12}$:  On a Boolean algebra, every projection is a pure projection. On a quantum logic,  a $G$-map is not necessarilly a pure projection, see Example \ref{existenciaGPpr}.

\end{itemize}

The modeling of random events 	using of  quantum logics allows to test, among other things,  stochastic causality.

\end{document}